\newtheorem{definition}{Definition}
\newtheorem{lemma}[definition]{Lemma}
\newtheorem{theorem}[definition]{Theorem}
\newtheorem{corollary}[definition]{Corollary}
\def\squareforqed{\hbox{\rlap{$\sqcap$}$\sqcup$}}
\def\qed{\ifmmode\squareforqed\else{\unskip\nobreak\hfil
\penalty50\hskip1em\null\nobreak\hfil\squareforqed
\parfillskip=0pt\finalhyphendemerits=0\endgraf}\fi}
\def\endenv{\ifmmode\;\else{\unskip\nobreak\hfil
\penalty50\hskip1em\null\nobreak\hfil\;
\parfillskip=0pt\finalhyphendemerits=0\endgraf}\fi}
\newenvironment{proof}{\noindent \textbf{{Proof.~} }}{\qed}
\def\min{\mathop{\rm min}}
\def\max{\mathop{\rm max}}
\newcommand{\nc}{\newcommand}
\nc{\bra}[1]{\langle#1|} \nc{\ket}[1]{|#1\rangle} \nc{\proj}[1]{|
#1\rangle\!\langle #1 |} \nc{\ketbra}[2]{|#1\rangle\!\langle#2|}
\nc{\braket}[2]{\langle#1|#2\rangle} 
\nc{\norm}[1]{\lVert#1\rVert} \nc{\abs}[1]{|#1|}
\nc{\lar}{\leftarrow} \nc{\rar}{\rightarrow} \nc{\ox}{\otimes}
\nc{\op}[2]{|#1\rangle\!\langle#2|}
\nc{\ip}[2]{\langle#1|#2\rangle} \nc{\dg}{\dagger}
\nc{\cA}{{\cal A}} \nc{\cB}{{\cal B}} \nc{\cC}{{\cal C}}
\nc{\cD}{{\cal D}} \nc{\cE}{{\cal E}} \nc{\cF}{{\cal F}}
\nc{\cG}{{\cal G}} \nc{\cH}{{\cal H}} \nc{\cI}{{\cal I}}
\nc{\cJ}{{\cal J}} \nc{\cK}{{\cal K}} \nc{\cL}{{\cal L}}
\nc{\cM}{{\cal M}} \nc{\cN}{{\cal N}} \nc{\cO}{{\cal O}}
\nc{\cP}{{\cal P}} \nc{\cR}{{\cal R}} \nc{\cS}{{\cal S}}
\nc{\cT}{{\cal T}} \nc{\cX}{{\cal X}} \nc{\cZ}{{\cal Z}}
\begin{document}
\title{Optimal Entanglement Transformations Among N-qubit W-Class States}
\author{Wei Cui}
\email{cuiwei@physics.utoronto.ca}
\author{Eric Chitambar}
\email{e.chitambar@utoronto.ca}
\author{Hoi-Kwong Lo}
\email{hklo@comm.utoronto.ca}
\affiliation{Center for Quantum Information and Quantum Control (CQIQC), \\
Department of Physics and Department of Electrical \& Computer Engineering, \\
University of Toronto, Toronto, Ontario, M5S 3G4, Canada}

\begin{abstract}
We investigate the physically allowed probabilities for transforming one $N$-partite W-class state to another by means of local operations assisted with classical communication (LOCC).  Recently, Kinta\c{s} and Turgut have obtained an upper bound for the maximum probability of transforming two such states \cite{Kintas-2010a}.  Here, we provide a simple sufficient and necessary condition for when this upper bound can be satisfied and thus when optimality of state transformation can be achieved.  Our discussion involves obtaining lower bounds for the transformation of arbitrary W-class states and showing precisely when this bound saturates the bound of \cite{Kintas-2010a}.  Finally, we consider the question of transforming symmetric W-class states and find that in general, the optimal one-shot procedure for converting two symmetric states requires a non-symmetric filter by all the parties.
\end{abstract} 

\date{\today}

\maketitle

\section{Introduction}

One of the most fundamental questions in entanglement theory is whether one entangled state can be converted to another by only performing local quantum operations on each subsystem while allowing classical communication among the parties.  Protocols of this form are known as LOCC and they constitute the class of operations unable to increase the amount of entanglement in a multi-party system on average.  When one state can be transformed into another via LOCC with a non-zero probability, the two states are said to SLOCC (stochastic LOCC) related.  Much research has been devoted to studying SLOCC transformations with special attention placed on reversible SLOCC convertibility since this property provides one way of identifying states with the same type of entanglement \cite{Dur-2000a, Verstraete-2003a}.  However, SLOCC convertibility does not consider the probability of transformation success, a quantity having obvious operational importance, and beyond the bipartite pure states \cite{Vidal-1999a}, very little is known about the feasible probability rates of converting states using LOCC.

In this article, we study the optimal LOCC convertibility among $N$-qubit W-class states.  Such states are of the form $\sqrt{x_0}\ket{00\cdots 0}+\sqrt{x_1}\ket{10\cdots 0}+\sqrt{x_2}\ket{01\cdots 0}+\cdots+\sqrt{x_N}\ket{00\cdots 1}$ with the $N$-party W state $\ket{W_N}$ corresponding to $x_0=0$ and $x_i=\frac{1}{N}$ for $1\leq i\leq N$.  W-class states represent a very important family of states since they posses a high degree of robustness with respect to loss of entanglement \cite{Briegel-2001a} and non-local correlations \cite{SenDe-2003a} in the presence of noise.  Furthermore, many specific quantum cryptography and communication protocols have been designed which utilize W-type entanglement (see \cite{Wang-2009a} and references within).  Experimental setups have been proposed for the production of multiqubit W states \cite{Bastin-2009a} with the generation of $\ket{W_4}$ already realized \cite{Wieczorek-2009a}.  For the special case of three qubits, partial results concerning optimality of LOCC conversion rates have been obtained \cite{Yildiz-2010a}.

In the general multiparty setting, Kinta\c{s} and Turgut recently made significant progress in understanding LOCC transformations of W-class states \cite{Kintas-2010a}.  They prove an upper bound on the optimal probability of converting two such states, and the first part of our article derives a necessary condition for when this rate can be achieved.  We then move on to construct a general procedure for converting two W-class states which provides a lower bound on the optimal conversion probability.  The necessary condition for achieving Kinta\c{s} and Turguts' bound we obtain in the first part turns out to be sufficient when using our constructed protocol thus proving optimality.  Since much of our analysis relies on results reported in \cite{Kintas-2010a}, we will try to stay as consistent as possible with the notation established there.  In the final section, we turn to the problem of converting symmetric W-class states; i.e. those states which remain invariant under a permutation of parties.  It has been shown that two multiqubit symmetric states are related by a reversible SLOCC transformation if and only if the transformation can be accomplished by a permutation invariant SLOCC filtering operation \cite{Mathonet-2010a}.  In other words, if $\ket{\psi}$ and $\ket{\phi}$ are $N$-qubit SLOCC equivalent symmetric states: $\bigotimes_{i=1}^N A_i\ket{\psi}=\ket{\phi}$, then there exists an operator $M$ such that $M^{\otimes N}\ket{\psi}=\ket{\phi}$.  However, one question still left open is whether the same probability of transformation can be achieved in the symmetric case.  If a filter $\bigotimes_{i=1}^N A_i$ succeeds in transformation with probability $p$, does there necessarily exist a symmetric filter $M^{\otimes N}$ that transforms with the same probability?  We show that in general the answer is no and often the transformation can be achieved with a greater probability when only a single party acts non-trivially.  At the same time, we further observe the single party strategy to not be optimal in general.  These results nicely demonstrate the complexity in analyzing issues of LOCC optimality as no simple general result appears to exist, even in the symmetric multiqubit case.

For simplicity, we introduce the notation $\ket{\vec{0}}=\ket{0}^{\otimes N}$ and $\ket{\vec{i}}=\ket{0}^{\otimes i-1}\ket{1}\ket{0}^{\otimes n-i}$ so that the $N$-party W state can be expressed as $\ket{W_N}=\frac{1}{\sqrt{N}}\sum_{i=1}^N\ket{\vec{i}}$.  A state $\ket{\Psi}$ is defined as a W-class state if there exists invertible operators $A_i$ such that $\bigotimes_{i=1}^N A_i\ket{W_N}=\ket{\Psi}$.  Equivalently, a state is in the W class if it is SLOCC equivalent to the state $\ket{W_N}$ \cite{Dur-2000a}.  Each $A_i$ can be written in the form $U_i\tilde {A}_i$ with $\tilde{A}_i=\left(\begin{smallmatrix}a&b\\0&c\end{smallmatrix}\right)$ and $a,c$ being real.  Then every W-class state $\ket{\Psi}$ is of the form $\sum_{i=0}^N\sqrt{x_0}\ket{\vec{i}}$ up to the application of local unitaries; i.e. $\ket{\Psi}=\bigotimes_{i=1}^N U_i\sum_{i=0}^N\sqrt{x_i}\ket{\vec{i}}$ for unitaries $U_i$.  Furthermore, for three or more parties the coefficients $x_i$ are unique to each W-class state \cite{Kintas-2010a}.  To easily see this, observe that $\bigotimes_{i=1}^N U_i\sum_{i=0}^N\sqrt{x_i}\ket{\vec{i}}=\bigotimes_{i=1}^N V_i\sum_{i=0}^N\sqrt{x'_i}\ket{\vec{i}}$ implies $\sum_{i=0}^N\sqrt{x_i}\ket{\vec{i}}=\bigotimes_{i=1}^N W_i\sum_{i=0}^N\sqrt{x_i}\ket{\vec{i}}$ for some unitaries $W_i$.  But this means that each party's reduced state is the totally mixed state which is possible only for the bipartite state $\sqrt{\frac{1}{2}}(\ket{01}+\ket{10})$.  Thus, we can unambiguously represent every multipartite W-class state by $\ket{\vec{x}}$ where $\vec{x}=(x_1,\cdots,x_N)$ is its unique coefficient vector with $x_0=1-\sum_{i=1}^Nx_i$.

The main result presented in \cite{Kintas-2010a} is that whenever party $k$ performs a measurement on state $\ket{\vec{x}}$, for each outcome $\lambda$ occurring with probability $p_\lambda$, the components transform as 
\begin{align} 
\label{Eq:coords change}
x_j&\to s_\lambda x_j\;\;\text{for}\;j\not=k,0, &x_k&\to \{\frac{x_k}{t_\lambda},0\}
\end{align} such that $\sum_\lambda p_\lambda s_\lambda=1$ and $\sum_\lambda \frac{p_\lambda}{t_\lambda}\leq 1$.  From these relations, it follows that under any LOCC transformation with outcomes indexed by $\lambda$, the vector components are non-increasing on average:
\begin{equation}
\label{Eq:CompMono}
x_i\geq\sum_{\lambda}p_\lambda x_{i,\lambda}.
\end{equation}
Consequently, for the transformation $\ket{\vec{x}}\to\ket{\vec{y}}$, the maximum probability of success $p_{max}$ is bounded by $p_{max}\leq \min_i\{r_i\}$ where $r_i=\frac{x_i}{y_i}$.  In the remainder of the article we will assume that $r_1\leq r_2\leq\cdots\leq r_N$ as any other ordering can be accounted for by relabeling.  Thus, $p_{max}\leq r_1$ and the next two sections will prove the following result.
\begin{theorem}
For W-class states $\ket{\vec{x}}$ and $\ket{\vec{y}}$, $\ket{\vec{x}}\to\ket{\vec{y}}$ with optimal probability $r_1$ if and only if $r_2\geq r_0$.
\end{theorem}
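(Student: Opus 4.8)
The plan is to prove both directions by playing the component upper bound \eqref{Eq:CompMono}, which already gives $p_{max}\le r_1$, against an explicit transformation protocol, and to show that the constructive lower bound matches $r_1$ exactly when $r_2\ge r_0$. The essential new ingredient I would isolate first is a companion monotonicity for the zeroth component. Writing party $k$'s measurement operators in the triangular form $M_\lambda=\left(\begin{smallmatrix}a_\lambda&b_\lambda\\0&c_\lambda\end{smallmatrix}\right)$ and imposing the POVM normalisation $\sum_\lambda M_\lambda^\dagger M_\lambda=I$, a direct computation of the action on $\ket{\vec x}$ yields, on average, $\sum_\lambda p_\lambda x_{j,\lambda}=x_j$ for every $j\neq 0,k$, together with $\sum_\lambda p_\lambda x_{k,\lambda}=x_k-D$ and $\sum_\lambda p_\lambda x_{0,\lambda}=x_0+D$, where $D=x_k\sum_\lambda|b_\lambda|^2\ge 0$. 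In words, a single party's measurement fixes all other components on average and transfers a non-negative amount from its own component into the zeroth one; hence $x_0$ is non-decreasing on average under any LOCC protocol, the exact dual of \eqref{Eq:CompMono}. I expect this ``flow law'' to be the workhorse for both directions.

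For necessity, suppose $\ket{\vec x}\to\ket{\vec y}$ succeeds with probability $r_1$. Saturating $p\,y_1\le\sum_\lambda p_\lambda x_{1,\lambda}\le x_1$ forces $x_{1,\lambda}=0$ on every failure branch and $\sum_\lambda p_\lambda x_{1,\lambda}=x_1$; by the flow law the latter means party $1$ never pumps, i.e. it may use only diagonal filters, and each failure branch is a state in which party $1$ is unentangled from the rest. The content of the theorem is then that $x_0$, which can only grow, cannot be shed: if $x_0$ is too large relative to its target in the sense $r_0>r_2$, the weight that must be pushed into the zeroth component to realise $y_0$ on the success branches exceeds what the remaining parties can supply while still delivering $y_2,\dots,y_N$. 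Making this quantitative is where the real work lies. I would pass to the SLOCC filter $\bigotimes_k A_k$ associated with each success branch, use the relations $c_i=a_i\kappa/(P\sqrt{r_i})$ with $P=\prod_k a_k$ and $\kappa^2$ the branch probability, and combine them with the operator-norm bound $|b_i|^2\le(1-a_i^2)(1-c_i^2)$ to bound the pumping capacity of parties $2,\dots,N$, concluding that the $x_0$-equation can be met at rate $r_1$ only if $r_2\ge r_0$.

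For sufficiency I would exhibit a protocol saturating $r_1$ whenever $r_2\ge r_0$, and the construction naturally splits into two regimes. When $r_1\le r_0\le r_2$, a single product measurement of diagonal filters $A_k=\mathrm{diag}(a_k,c_k)$ with $a_k^2=\min(1,r_k/r_0)$ and $c_k=a_k\sqrt{r_0/r_k}$ sends $\ket{\vec x}$ to $\ket{\vec y}$ on its distinguished outcome with probability $r_0\prod_k\min(1,r_k/r_0)$; the entries are automatically bounded by $1$, and $r_2\ge r_0$ is precisely the condition making this product collapse to $r_1/r_0$, so the probability equals $r_0\cdot(r_1/r_0)=r_1$.

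The remaining regime $r_0<r_1$, in which $r_2\ge r_0$ holds automatically, is the main obstacle. Here no single filter attains $r_1$: since every $r_k\ge r_1>r_0$, the diagonal construction above yields only $r_0$, and raising this requires off-diagonal pumping into the zeroth component, which the constraint $|b|^2\le(1-a^2)(1-c^2)$ forbids once the diagonal entries are driven to one. I would therefore build a genuinely multi-outcome protocol in which parties $2,\dots,N$ pump weight into $x_0$ probabilistically while party $1$ performs the bottleneck diagonal measurement, and verify that the outcomes assemble into a valid instrument whose total success probability is $r_1$. Checking the POVM normalisation across branches — rather than for a single filter — is the delicate step, and it is also what forces the matching finer argument in the necessity direction, since the averaged monotonicity relations alone remain consistent for all values of $r_0$ and $r_2$ and therefore cannot by themselves produce the condition $r_2\ge r_0$.
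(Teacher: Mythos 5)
Your preliminary analysis is sound: the ``flow law'' you derive is exactly the content of the paper's relations \eqref{Eq:coords change}--\eqref{Eq:CompMono} (each $x_i$, $i\ge 1$, is non-increasing on average, hence $x_0$ is non-decreasing), and its saturation consequences --- party 1 restricted to diagonal filters on success branches, $x_1=0$ on every failure edge attached to one --- are precisely the paper's Lemma \ref{Thm:LOCCopt}. The gaps are in both of the steps you yourself flag as ``the real work.'' For necessity, collapsing each success branch into a product filter $\bigotimes_k A_k$ and invoking $|b_k|^2\le(1-a_k^2)(1-c_k^2)$ cannot by itself yield $r_2\ge r_0$: those constraints hold branch by branch, whereas the hypothesis ``total success probability $=r_1$'' concerns a sum $\sum_\beta\kappa_\beta^2$ over possibly many branches, none of which need individually carry probability close to $r_1$, and measurement completeness is a per-round condition that does not factor into per-branch operator inequalities. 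The paper supplies the idea your sketch lacks: completeness of each round's measurement ($\sum_\lambda p_\lambda s_\lambda = 1$ while $\sum_\lambda p_\lambda/t_\lambda\le 1$) forces some outcome with $s_\lambda t_\lambda\ge 1$; such an outcome keeps $x_1\ne 0$, so by Lemma \ref{Thm:LOCCopt}(i) it lies on an intermediate edge, and iterating produces one success branch all of whose edges satisfy $s_\lambda t_\lambda\ge 1$. Multiplying the relations \eqref{Eq:coords change} along that branch, with rounds split according to whether party 1, party 2, or neither measures, gives $y_0\ge\prod_{i\in A}(s_it_i)\,(y_1/x_1)\,x_0$ and $(y_1/y_2)\prod_{i\in A}s_it_i=\prod_{j\in B}s_jt_j\,(x_1/x_2)\ge x_1/x_2$, whence $r_2\ge r_0$. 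Nothing in your proposal plays the role of the per-edge invariant $s_\lambda t_\lambda$, so the necessity direction remains unproven.

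For sufficiency, your one-shot diagonal filter in the regime $r_1\le r_0\le r_2$ is correct and complete, and is in fact cleaner than the paper's sequential protocol for that regime: with $a_1^2=r_1/r_0$, $c_1=1$, and $a_k=1$, $c_k=\sqrt{r_0/r_k}$ for $k\ge2$, one checks directly that $\bigotimes_kA_k\ket{\vec x}=\sqrt{r_1}\ket{\vec y}$. But the regime $r_0<r_1$ --- where, as you correctly argue, no single product filter can attain $r_1$ --- is exactly where you stop at a declaration of intent, and it is the nontrivial half of the construction. The paper fills it with the T2 measurements of Eq.\ \eqref{Eq:coords change MOSM/PSM} and Lemma \ref{Lem:MOSM/PSM prop}(ii): a two-outcome measurement by party $k$ whose success outcome rescales all other components by $1/p$ while moving $x_k\to 1-(1-x_k)/p$, with $p$ chosen so that $r_k$ is pulled down to $r_{k-1}$; chaining these over $k=2,3,\dots$ until every ratio reaches $1$ makes the probabilities telescope to exactly $r_1$ (the first case of Theorem \ref{Thm:MainLowerBound}). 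Without this construction, or an equivalent explicit instrument whose completeness you verify, the ``if'' direction is established only where a single filter suffices. In short: your middle-regime filter is a genuine (and nice) simplification, but both directions of the theorem still have a missing core argument.
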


\section{Upper Bounds}

We begin with a more detailed description of a general W-class LOCC transformation.  We can model every $m$-round LOCC protocol transforming $\ket{\vec{x}}\to\ket{\vec{y}}$ by a tree split into $m$ segments with $e^{(ij)}$ denoting the $j^{th}$ edge in the $i^{th}$ segment.  The tree begins with a single node representing the initial state $\ket{\vec{x}}$ and ends after the $m^{th}$ segment with each final node representing a different outcome state.   We will say a \textbf{branch} is any uni-directional connected path that traverses the entire length of the tree.  It is a \textbf{success branch} if its final state is $\ket{\vec{y}}$; otherwise, the branch is a \textbf{failure branch}.  An edge is called an \textbf{intermediate edge} if it contains at least one successful branch traveling through it; otherwise it is called a \textbf{failure edge}.  Let $\mathcal{I}^{(i-1,j)}$ denote the set of indices such that $k\in\mathcal{I}^{(i-1,j)}$ iff $e^{(i,k)}$ is an intermediate edge connected to edge $e^{(i-1,j)}$.  Likewise, let $\mathcal{F}^{(i-1,j)}$ denote the set of indices such that $k\in\mathcal{F}^{(i-1,j)}$ iff $e^{(i,k)}$ is a failure edge connected to edge $e^{(i-1,j)}$.  The set $\mathcal{I}^{(0,0)}$ (resp. $\mathcal{F}^{(0,0)}$) will contain the indices corresponding to the intermediate (resp. failure) edges connected to the starting node of the tree.  Finally, denote the state obtained following edge $e^{(i,j)}$ by $\ket{\vec{x}^{(i,j)}}$ with components $x^{(i,j)}_l$, and let $p_{i,j}$ be the probability of moving along edge $e^{(i,j)}$.  From the definitions, we have $\sum_{k\in\mathcal{I}^{(i-1,j)}}p_{i,k}+\sum_{k\in\mathcal{F}^{(i-1,j)}}p_{i,k}=1$ for every $i,j$.

With this formalism, we can systematically calculate the total success probability of obtaining $\ket{\vec{y}}$ from $\ket{\vec{x}}$.  A branch is successful if and only if it travels only along intermediate edges.  So the total probability is given by summing over all possible intermediate edge paths.  This value is given by
\begin{align}
\label{Eq:LOCCtotalProb}
P(\ket{\vec{x}}\to\ket{\vec{y}})&=\sum_{k_1\in\mathcal{I}^{(0,0)}}p_{1,k_1}\cdots\sum_{k_m\in\mathcal{I}^{(m-1,k_{m-1})}}p_{m,k_m}\notag\\&=\sum_{k_1\in\mathcal{I}^{(0,0)}}\cdots\sum_{k_m\in\mathcal{I}^{(m-1,k_{m-1})}}\prod_{i=1}^mp_{i,k_i}.
\end{align}
Starting from the state $\ket{\vec{x}}$ and repeatedly applying \eqref{Eq:CompMono}, we have 
\begin{widetext}
\begin{align}
\label{Eq:LOCCIneq}
x_l\geq&\sum_{k_1\in\mathcal{I}^{(0,0)}}p_{1,k_1}x_l^{(1,k_1)}+\sum_{k_1\in\mathcal{F}^{(0,0)}}p_{1,k_1}x_l^{(1,k_1)}\notag\\
\geq&\sum_{k_1\in\mathcal{I}^{(0,0)}}p_{1,k_1}\left(\sum_{k_2\in\mathcal{I}^{(1,k_1)}}p_{2,k_2}x_l^{(2,k_2)}+\sum_{k_2\in\mathcal{F}^{(1,k_1)}}p_{2,k_2}x_l^{(2,k_2)}\right)+\sum_{k_1\in\mathcal{F}^{(0,0)}}p_{1,k_1}x_l^{(1,k_1)}\notag\\
\cdots\;\;\geq&\sum_{k_1\in\mathcal{I}^{(0,0)}}\cdots\sum_{k_m\in\mathcal{I}^{(m-1,k_{m-1})}}\prod_{i=1}^mp_{i,k_i}x_l^{(m,k_m)}
+\sum_{k_1\in\mathcal{I}^{(0,0)}}\cdots\sum_{k_{m-1}\in\mathcal{I}^{(m-2,k_{m-2})}}\sum_{k_m\in\mathcal{F}^{(m-1,k_{m-1})}}\prod_{i=1}^mp_{i,k_i}x_l^{(m,k_m)}\notag\\
&+\sum_{k_1\in\mathcal{I}^{(0,0)}}\cdots\sum_{k_{m-2}\in\mathcal{I}^{(m-3,k_{m-3})}}\sum_{k_{m-1}\in\mathcal{F}^{(m-2,k_{m-2})}}\prod_{i=1}^{m-1}p_{i,k_i}x_l^{(m-1,k_{m-1})}\cdots +\sum_{k_1\in\mathcal{F}^{(0,0)}}p_{1,k_1}x_l^{(1,k_1)}.
\end{align}
\end{widetext}
This equation is quite informative since we know that for $k_m\in\mathcal{I}^{(m-1,k_{m-1})}$ we have $x_l^{(m,k_m)}=y_l$.  Then by dividing both sides of \eqref{Eq:LOCCIneq} by $y_l$ and using \eqref{Eq:LOCCtotalProb}, we have 
\begin{equation}
r_l\geq P(\ket{\vec{x}}\to\ket{\vec{y}})+ \text{``Failure Edges''}
\end{equation}
where ``Failure Edges'' refers to the non-negative quantity of all but the first term in the final inequality of \eqref{Eq:LOCCIneq}.  Physically, it is the average of the $l^{th}$ component of all failure states produced after some measurement on a success branch.  

For $P(\ket{\vec{x}}\to\ket{\vec{y}})=r_1$, this requires strict equalities in \eqref{Eq:LOCCIneq} and furthermore, the ``Failure Edge'' terms must vanish.  This latter condition means that $x_1^{(i,j)}=0$ for every failure edge $e^{(i,j)}$ connected to a success branch.  We combine these results in the following lemma.
\begin{lemma}
\label{Thm:LOCCopt}
If $\ket{\vec{x}}\to\ket{\vec{y}}$ by LOCC with probability $r_1$, then
\begin{itemize}
\item[\upshape(i)] if $e^{(i,j)}$ is an edge connected to a success branch, $x_1^{(i,j)}>0$ implies $e^{(i,j)}$ is an intermediate edge, and
\item[\upshape(ii)] $x_1^{(i-1,j)}= \sum_{k\in\mathcal{I}^{(i-1,j)}}p_{i,k}x_1^{(i,k)}$ for every $i,j$.
\end{itemize}
\end{lemma}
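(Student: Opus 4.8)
The plan is to specialize the master inequality \eqref{Eq:LOCCIneq} to the index $l=1$ and exploit the fact that the success probability has been pushed to its extreme value $r_1$. After dividing by $y_1$, the chain \eqref{Eq:LOCCIneq} reads $r_1 \geq P(\ket{\vec{x}}\to\ket{\vec{y}}) + \text{``Failure Edges''}$, in which the second summand is a sum of manifestly non-negative terms. Imposing $P(\ket{\vec{x}}\to\ket{\vec{y}})=r_1$ then forces two things at once: the non-negative ``Failure Edges'' quantity must equal exactly $0$, and, since the first and last members of the chain of inequalities in \eqref{Eq:LOCCIneq} now coincide, every intermediate inequality in that chain must in fact be an equality. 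These two consequences will yield (i) and (ii) respectively.

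First I would extract part (i) from the vanishing of the ``Failure Edges'' term. By construction this quantity is a sum over all ways of following intermediate edges from the root and then departing along a single failure edge $e^{(i,j)}$; each such summand is a product of transition probabilities (all positive, after pruning any zero-probability edges) times the component $x_1^{(i,j)}$ at that failure edge. A sum of non-negative terms vanishes only if each term vanishes, so $x_1^{(i,j)}=0$ for every failure edge $e^{(i,j)}$ hanging off a success branch. The contrapositive is precisely (i): an edge attached to a success branch that carries a positive first component cannot be a failure edge, and hence must be intermediate.

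Next I would obtain part (ii) from the step-by-step equalities. Passing from one line of \eqref{Eq:LOCCIneq} to the next amounts to applying the single-measurement monotone \eqref{Eq:CompMono} at each intermediate edge reached along a success path, so equality of the whole chain forces \eqref{Eq:CompMono} to hold with equality at every such edge $e^{(i-1,j)}$; that is, $x_1^{(i-1,j)} = \sum_{k\in\mathcal{I}^{(i-1,j)}}p_{i,k}x_1^{(i,k)} + \sum_{k\in\mathcal{F}^{(i-1,j)}}p_{i,k}x_1^{(i,k)}$, where the two sums together run over all children. By part (i) the failure children contribute $x_1^{(i,k)}=0$, so the second sum drops out and we recover the identity in (ii) at every intermediate edge; at a failure edge hanging off a success branch both sides vanish identically (the left by (i), the right because such an edge has no intermediate descendants), so the stated identity holds for every relevant $i,j$.

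The main obstacle I anticipate is bookkeeping rather than analysis: one has to match the ``Failure Edges'' terms of \eqref{Eq:LOCCIneq} precisely to the failure edges that branch off success paths, and to argue cleanly that ``equality of the two endpoints of the chain'' propagates back to ``equality at each node'' given the recursive, tree-indexed structure of the nested sums. The only genuinely analytic point, and the one place where a short positivity remark is required, is to note that the probability prefactors multiplying each $x_1^{(i,j)}$ are nonzero, so that the vanishing of a product really does force the relevant component to vanish.
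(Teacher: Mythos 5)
Your proposal is correct and follows essentially the same route as the paper: the paper also specializes the chain \eqref{Eq:LOCCIneq} to $l=1$, notes that achieving $P(\ket{\vec{x}}\to\ket{\vec{y}})=r_1$ forces the non-negative ``Failure Edges'' quantity to vanish (yielding (i)) and forces equality at every step of the chain, i.e.\ in each application of \eqref{Eq:CompMono} along success paths (yielding (ii)). Your write-up is in fact slightly more careful than the paper's, since you make explicit the positivity of the probability prefactors and the reduction of the full-children sum to the intermediate-children sum via (i).
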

Combining relations \eqref{Eq:coords change} with this lemma, the following becomes apparent.
\begin{corollary}
If $\ket{\vec{x}}\to\ket{\vec{y}}$ by LOCC with probability $r_1$, there always exists at least one success branch such that the measurements along each edge satisfy $s_\lambda t_\lambda\geq 1$.
\end{corollary}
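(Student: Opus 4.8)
The plan is to build the required branch greedily, reducing the global statement to a purely local one at each measurement node. Precisely, I would prove the following claim: at any node $e^{(i-1,j)}$ lying on a success branch, at least one of the intermediate edges $\{e^{(i,\lambda)}:\lambda\in\mathcal{I}^{(i-1,j)}\}$ issuing from it carries a measurement with $s_\lambda t_\lambda\geq 1$. Granting this, the corollary follows immediately: starting at the root (which lies on every success branch) I repeatedly step along such a good intermediate edge, and since an intermediate edge by definition meets at least one success branch, the child node again lies on a success branch; after $m$ segments the intermediate edges terminate in $\ket{\vec{y}}$. The concatenation of the chosen edges is then a success branch every one of whose measurements satisfies $s_\lambda t_\lambda\geq 1$.

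For the local claim I would split on the identity of the party $k$ performing the measurement at $e^{(i-1,j)}$. If $k\neq 1$ then \eqref{Eq:coords change} gives $x_1^{(i,\lambda)}=s_\lambda x_1^{(i-1,j)}$ on every outgoing edge, so part (ii) of Lemma~\ref{Thm:LOCCopt} reads $\sum_{\lambda\in\mathcal{I}^{(i-1,j)}}p_{i,\lambda}s_\lambda=1$. Combined with $\sum_{\lambda\in\mathcal{I}^{(i-1,j)}}p_{i,\lambda}/t_\lambda\leq\sum_\lambda p_{i,\lambda}/t_\lambda\leq 1$ this yields $\sum_{\lambda\in\mathcal{I}^{(i-1,j)}}p_{i,\lambda}(s_\lambda-1/t_\lambda)\geq 0$, whence some intermediate outcome has $s_\lambda\geq 1/t_\lambda$, i.e. $s_\lambda t_\lambda\geq 1$. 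The same averaging settles the case $k=1$ as long as party $1$ produces no failure edge here: then every outgoing edge is intermediate, so $\sum_{\lambda\in\mathcal{I}^{(i-1,j)}}p_{i,\lambda}s_\lambda=\sum_\lambda p_{i,\lambda}s_\lambda=1$, while Lemma~\ref{Thm:LOCCopt}(ii) now gives $\sum_{\lambda\in\mathcal{I}^{(i-1,j)}}p_{i,\lambda}/t_\lambda=1$, and a good edge again exists.

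The one case that resists this argument---and which I expect to be the real obstacle---is when the bottleneck party $1$ itself measures and does spawn a failure edge. There $x_1^{(i,\lambda)}=x_1^{(i-1,j)}/t_\lambda$ on the surviving edges, so Lemma~\ref{Thm:LOCCopt}(ii) pins $\sum_{\lambda\in\mathcal{I}^{(i-1,j)}}p_{i,\lambda}/t_\lambda=1$, while the weight of the unit sum $\sum_\lambda p_{i,\lambda}s_\lambda=1$ drained off by the failure edge forces $\sum_{\lambda\in\mathcal{I}^{(i-1,j)}}p_{i,\lambda}s_\lambda<1$; the averaging inequality then points the wrong way and no intermediate edge need be good. (Such a step is exactly a filter by party $1$ that annihilates the first component on one outcome while boosting it on the others, and one can in fact exhibit optimal protocols containing such a step.) My proposed way around this is to argue that this branching is wasteful and may be removed: because the entire subtree below a surviving edge is itself an optimal $r_1$-transformation, the relative boost party $1$ confers on $x_1$ can instead be realized by the non-bottleneck parties contracting their own components, at which nodes the averaging argument above does apply. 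Showing that this rerouting can always be carried out while keeping the tree optimal and preserving the hypotheses of Lemma~\ref{Thm:LOCCopt} is the step I expect to demand the most care, since it is precisely here that the full strength of optimality---rather than Lemma~\ref{Thm:LOCCopt} alone---must be invoked.
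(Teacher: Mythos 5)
Your greedy reduction and your treatment of the cases where a party other than party $1$ measures are sound (note only that Lemma~\ref{Thm:LOCCopt}(ii) is divided by $x_1^{(i-1,j)}$, which is positive because every state on a success branch must have all components positive by \eqref{Eq:CompMono}). But the proposal does not prove the corollary: the whole burden sits in the case you flag, party $1$ measuring with a failure edge, and there you offer only a plan. That plan cannot be completed, because the corollary as stated---a claim about \emph{every} protocol achieving $r_1$---is false, for exactly the reason you suspect. A counterexample is the paper's own lower-bound construction: when $r_1<r_0\leq r_2$ the optimal probability is $r_1$, and the protocol of Theorem~\ref{Thm:MainLowerBound} begins with a T1 measurement by party $1$, implemented by the operators $\left(\begin{smallmatrix}a&0\\0&1\end{smallmatrix}\right)$ and $\left(\begin{smallmatrix}\sqrt{1-a^2}&0\\0&0\end{smallmatrix}\right)$ with $a^2=s_1p_1=r_1/r_0<1$. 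The surviving outcome has $t_1=p_1$, hence $s_1t_1=r_1/r_0<1$, while the other outcome zeroes $x_1$ (a failure edge with $s_\lambda>0$). This protocol attains the optimum $r_1$, yet its unique success branch starts with an edge violating $s_\lambda t_\lambda\geq 1$; so no rerouting argument applied to a \emph{given} optimal protocol can produce the promised branch. At best your rerouting would establish the different, existential statement that \emph{some} optimal protocol contains such a branch.

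You should also know that the paper's own proof stumbles at precisely the point you isolated. It deduces from $\sum_\lambda p_\lambda s_\lambda=1$ and $\sum_\lambda p_\lambda/t_\lambda\leq 1$ that some outcome has $s_\lambda t_\lambda\geq 1$, and then asserts that this outcome's first component is nonzero so that Lemma~\ref{Thm:LOCCopt}(i) makes it an intermediate edge; when party $1$ measures, the outcome located this way can be the disentangling one ($1/t_\lambda=0$, $s_\lambda>0$), whose first component is zero---the T1 step above is exactly this situation. What is both true and sufficient for the paper's purposes is the weaker claim that some success branch has $s_\lambda t_\lambda\geq 1$ on every edge where a party \emph{other than} party $1$ measures: in passing from \eqref{Eq:prod1} and \eqref{Eq:prod2} to \eqref{Eq:r2bound}, the party-$1$ factors $\prod_{i\in A}s_it_i$ cancel upon substitution, and only $\prod_{i\in B}s_it_i\geq 1$ is used. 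For a party-$2$ measurement your averaging argument closes with no rerouting at all: a zeroing outcome by party $2$ with $s_\lambda>0$ would have nonzero first component, hence be intermediate by Lemma~\ref{Thm:LOCCopt}(i), yet its second component is zero and by \eqref{Eq:CompMono} could never recover to $y_2>0$---a contradiction. So such outcomes carry $s_\lambda=0$, the full weight of $\sum_\lambda p_\lambda s_\lambda=1$ rests on non-zeroing outcomes, and comparison with $\sum_\lambda p_\lambda/t_\lambda\leq 1$ yields an intermediate edge with $s_\lambda t_\lambda\geq 1$; on party-$1$ edges one simply follows any intermediate edge. Redirecting your proof at this weakened statement turns it into a complete and correct argument, and in fact repairs the paper's Theorem 1 as well.
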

\begin{proof}
For any success branch, assume that in the first round measurement $s_\lambda t_\lambda<1$ for all $\lambda$.  Then $1=\sum_\lambda p_\lambda s_\lambda<\sum_\lambda \frac{p_\lambda}{t_\lambda}$ which is impossible.  Hence, there must be some outcome with $s_\lambda t_\lambda\geq 1$, and since the first component of the initial state is nonzero (it must be or else party one would unentangled with all other parties), the first component of the resultant state will likewise be nonzero.  Consequently, by Lemma \ref{Thm:LOCCopt}, the edge corresponding to outcome $\lambda$ is an intermediate edge.  Consider the round two measurement performed along this edge and repeat the previous argument.  This can be subsequently done for all $m$ rounds thus identifying a success branch in which all edges correspond to measurement outcomes satisfying $s_\lambda t_\lambda\geq 1$.
\end{proof}

Now for any transformation occurring with probability $r_1$, let p* denote one of the success branches described by this corollary and let its edges be $e^{(i,v_i)}$.  Along p* we can divide the protocol into three parts encoded by index sets $A$, $B$, and  $C$ where $i\in A$ if party 1 performs a measurement along $e^{(i,v_i)}$, $i\in B$ if party 2 performs a measurement along $e_{i,v_i}$, and $i\in C$ if neither parties 1 or 2 perform a measurement along $e^{(i,v_i)}$.  From \eqref{Eq:coords change}, we have the following transformations during each edge in p*:
\begin{align}
i\in& A\Rightarrow &x_1^{(i,v_i)}&=\frac{x_1^{(i-1,v_{i-1})}}{t_i}, &x_2^{(i,v_i)}&=s_i x_2^{(i-1,v_{i-1})},\notag\\ &&x_0^{(i,v_i)}&=s_i x_0^{(i-1,v_{i-1})},&&\notag\\
i\in& B\Rightarrow &x_1^{(i,v_i)}&=s_i x_1^{(i-1,v_{i-1})}, &x_2^{(i,v_i)}&=\frac{x_2^{(i-1,v_{i-1})}}{t_i},\notag\\ &&x_0^{(i,v_i)}&\geq s_i x_0^{(i-1,v_{i-1})},&&\notag\\
i\in& C\Rightarrow &x_1^{(i,v_i)}&=s_i x_1^{(i-1,v_{i-1})}, &x_2^{(i,v_i)}&=s_i x_2^{(i-1,v_{i-1})},\notag\\ 
&&x_0^{(i,v_i)}&\geq s_i x_0^{(i-1,v_{i-1})}.&&
\end{align}
This implies the following relationship between the initial and final components:
\begin{align}
y_1&=\prod_{i\in A, j\in B, k\in C}\frac{1}{t_i}s_js_k x_1,&
y_2&=\prod_{i\in A, j\in B, k\in C}s_i\frac{1}{t_j}s_k x_2,\notag\\
y_0&=\prod_{i\in A, j\in B, k\in C}s_is_js_k x_0.&
\end{align}
Substituting $y_1$ into $y_0$ yields 
\begin{equation}
\label{Eq:prod1}
y_0=\prod_{i\in A}s_it_i\frac{y_1}{x_1}x_0,
\end{equation} while dividing $y_1$ by $y_2$ gives 
\begin{equation}
\label{Eq:prod2}
\frac{y_1}{y_2}\prod_{i\in A}s_it_i=\prod_{i\in B} s_it_i\frac{x_1}{x_2}\geq \frac{x_1}{x_2}
\end{equation}
where the last inequality follows from the fact that $s_it_i\geq 1$ along every edge in p*.  Then substituting \eqref{Eq:prod1} into \eqref{Eq:prod2} gives the bound
\begin{equation}
\label{Eq:r2bound}
r_2\geq r_0.
\end{equation}

\section{Lower Bounds}

In this section, we construct a specific protocol to obtain a lower bound for the maximum probability of transforming two W-class states.  In the protocol, there will be a single success branch with edges $e^{(i)}$ and states $\ket{\vec{x}^{(i)}}$ whose $k^{th}$ component is $x_k^{(i)}$.  Only two types of measurements will be performed for each acting party $k$: TYPE 1 (T1) which has an outcome $\lambda$ such that $t_\lambda=p_\lambda$ and $x_{0,\lambda}=s_\lambda x_0$, and TYPE 2 (T2) in which $s_\lambda=\frac{1}{p_\lambda}$ for some outcome.  From \cite{Kintas-2010a}, T1 and T2 measurements can always be performed on state $\ket{\vec{x}}$ for any choice of $p_\lambda$ and $s_\lambda$ so long as $p_\lambda s_\lambda\leq 1$ and $\frac{p_\lambda}{t_\lambda}\leq 1$.  Consequently, whenever $x_k>y_k$, a T2 measurement can be performed by party $k$ with $s_\lambda=p_\lambda=1$ and $t_\lambda=r_k$.  In this case, the coordinates of parties 1 through $n$ do not change on average, and so by normalization neither does the $0^{th}$ coordinate.  By explicitly solving for the scale factor $t_\lambda$ on party $k$, we have that in a T2 measurement by party $k$ on state $\ket{\vec{x}^{(i-1)}}$ the coordinates change as:
\begin{align}
\label{Eq:coords change MOSM/PSM}
x^{(i)}_j&=\frac{x_j^{(i-1)}}{p_i}\;\;\text{for}\;j\not=k, &x^{(i)}_k&=1-\frac{1-x_k^{(i-1)}}{p_{i}}.
\end{align}

For the transformation of $\ket{\vec{x}}$ to $\ket{\vec{y}}$ it is sufficient to reach some round $i$ in which $r^{(i)}_k\geq 1$ for all $k\geq 1$.  When $r^{(i)}_k>1$, as previously noted, the $k^{th}$ party can deterministically transform the state such that $r^{(i+1)}_k=1$ and all nonzero components are unchanged.  The basic idea of the protocol described here is to systematically raise each $k^{th}$ component closer to $y_k$ one at a time in a ``piggy-back'' fashion where $r_1^{(1)}$ is first increased and made equal to $r_2^{(2)}$, then both of them are increased and made equal to $r_3^{(3)},\cdots,$ etc.  Eventually, each $k^{th}$ component will be raised to $y^k$ or possibly greater.  The next simple lemma provides the tools for a precise implementation of this idea. 
\begin{lemma}
\label{Lem:MOSM/PSM prop}
{\upshape (i)} If $r_{k+1}^{(i-1)}\geq r_{0}^{(i-1)}\geq r_k^{(i-1)}$, there exists a $\ket{\vec{x}^{(i)}}$ and a T1 measurement by party $k$ transforming $\ket{\vec{x}^{(i-1)}}\to\ket{\vec{x}^{(i)}}$ such that $r_{k+1}^{(i)}\geq r^{(i)}_{k}=r^{(i)}_{0}$ and $s_ip_i=\frac{r_k^{(i-1)}}{r^{(i-1)}_{0}}$.  {\upshape (ii)} If $r_{k+1}^{(i-1)}\geq r^{(i-1)}_{k}\geq r^{(i-1)}_{k-1}$, there exists a $\ket{\vec{x}^{(i)}}$ and a T2 measurement by party $k$ transforming $\ket{\vec{x}^{(i-1)}}\to\ket{\vec{x}^{(i)}}$ such that $r_{k+1}^{(i)}\geq r^{(i)}_{k}=r^{(i)}_{k-1}$.
\end{lemma}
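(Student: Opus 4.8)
The plan is to translate each claim into a statement about how the ratios $r_j^{(i)}:=x_j^{(i)}/y_j$ evolve under the designated measurement, then to solve for the free measurement parameter so that the required equality holds, and finally to confirm that the resulting measurement is admissible (i.e. satisfies $p_\lambda s_\lambda\le 1$ and $p_\lambda/t_\lambda\le1$) and that the accompanying inequality $r_{k+1}^{(i)}\ge r_k^{(i)}$ is inherited from the hypotheses. In both parts the ordering assumptions on the ratios are precisely what is needed to certify admissibility and the extra inequality, so the argument reduces to a direct computation once the transformation rules are written down.

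For part (i) I would first record the action of a T1 measurement by party $k$ on the ratios. Since a T1 outcome obeys $t_\lambda=p_\lambda$ and $x_{0,\lambda}=s_\lambda x_0$, relation \eqref{Eq:coords change} gives $r_j^{(i)}=s_i r_j^{(i-1)}$ for every $j\ne k$ (including $j=0$) and $r_k^{(i)}=r_k^{(i-1)}/t_i$. Imposing the target equality $r_k^{(i)}=r_0^{(i)}$ forces $s_i t_i=r_k^{(i-1)}/r_0^{(i-1)}$, and because $t_i=p_i$ this is exactly the claimed relation $s_i p_i=r_k^{(i-1)}/r_0^{(i-1)}$; any factorization of this product into positive $s_i,t_i$ with $t_i=p_i\le 1$ will do. Admissibility is immediate: $p_i/t_i=1$, and $p_i s_i=r_k^{(i-1)}/r_0^{(i-1)}\le 1$ by the hypothesis $r_0^{(i-1)}\ge r_k^{(i-1)}$. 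Finally $r_{k+1}^{(i)}=s_i r_{k+1}^{(i-1)}$ while $r_k^{(i)}=s_i r_0^{(i-1)}$ once the equality is enforced, so $r_{k+1}^{(i)}\ge r_k^{(i)}$ reduces exactly to the remaining hypothesis $r_{k+1}^{(i-1)}\ge r_0^{(i-1)}$.

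For part (ii) I would use the explicit T2 rule \eqref{Eq:coords change MOSM/PSM}, under which $r_j^{(i)}=r_j^{(i-1)}/p_i$ for all $j\ne k$, whereas the measured component transforms nonlinearly as $x_k^{(i)}=1-(1-x_k^{(i-1)})/p_i$. Writing $x_k^{(i-1)}=r_k^{(i-1)}y_k$ and demanding $r_k^{(i)}=r_{k-1}^{(i)}=r_{k-1}^{(i-1)}/p_i$ yields, after clearing denominators, the single solution $p_i=(1-x_k^{(i-1)})+y_k r_{k-1}^{(i-1)}$. The main thing to check, and the only real obstacle, is that this $p_i$ is a legitimate probability: $p_i\le 1$ follows from $r_k^{(i-1)}\ge r_{k-1}^{(i-1)}$ (so that the subtracted quantity $y_k(r_k^{(i-1)}-r_{k-1}^{(i-1)})$ is nonnegative), while $p_i>0$ follows from $x_k^{(i-1)}<1$ together with nonnegativity of $y_k r_{k-1}^{(i-1)}$, and $x_k^{(i-1)}<1$ holds because the genuinely $N$-partite entangled state $\ket{\vec{x}^{(i-1)}}$ has another strictly positive component such as $x_{k+1}^{(i-1)}>0$. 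Admissibility is then automatic: $p_\lambda s_\lambda=1$ for a T2 outcome, and one checks $p_i\le t_i$ directly from \eqref{Eq:coords change MOSM/PSM}, giving $p_i/t_i\le1$. The closing inequality again comes for free, since both $r_{k+1}^{(i)}$ and $r_k^{(i)}$ carry the common factor $1/p_i$, so $r_{k+1}^{(i)}\ge r_k^{(i)}$ is equivalent to $r_{k+1}^{(i-1)}\ge r_{k-1}^{(i-1)}$, which is contained in the hypothesis chain $r_{k+1}^{(i-1)}\ge r_k^{(i-1)}\ge r_{k-1}^{(i-1)}$.

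The most delicate point throughout is ensuring $p_i\in(0,1]$ in part (ii), where the measured coordinate does not scale linearly and positivity must be read off from the structural fact that a genuine W-class state keeps all its party-components strictly positive and below one; everything else is bookkeeping of the ordering hypotheses through the (anti)scaling of the ratios.
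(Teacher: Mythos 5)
Your proposal is correct and follows essentially the same route as the paper's proof: track how the ratios scale under T1/T2 measurements, solve for the measurement parameter that enforces the desired equality (yielding $s_ip_i=r_k^{(i-1)}/r_0^{(i-1)}$ in part (i) and $p_i=1-y_k(r_k^{(i-1)}-r_{k-1}^{(i-1)})$ in part (ii)), and read off admissibility and the inequality $r_{k+1}^{(i)}\geq r_k^{(i)}$ from the ordering hypotheses. Your additional checks (positivity of $p_i$ and $p_i/t_i\leq 1$ for the T2 case) are points the paper leaves implicit, but they do not change the substance of the argument.
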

\begin{proof}
(i)  In any T1 measurement we have $r_k^{(i)}=\frac{r_k^{(i-1)}}{p_i}$ and $r^{(i)}_{0}=s_ir^{(i-1)}_{0}\leq s_ir^{(i-1)}_{k+1}=r^{(i)}_{k+1}$.  Setting these equal gives $s_ip_i=\frac{r_k^{(i-1)}}{r^{(i-1)}_{0}}\leq 1$ from which any choice of $s_i$ and $p_i$ satisfying this provides a realizable protocol.  (ii)
For a T2 measurement, $r_{k-1}^{(i)}=\frac{r_{k-1}^{(i-1)}}{p_i}\leq\frac{r_{k+1}^{(i-1)}}{p_i}=r^{(i)}_{k+1}$ and $y_{k}r^{(i)}_{k}=1-\frac{1}{p_i}(1-y_{k}r^{(i-1)}_{k})$.  Equality is achieved with the choice $p_i=1-y^{k}(r^{(i-1)}_{k}-r^{(i-1)}_{k-1})$.
\end{proof}

We now state the result of the protocol as a theorem and then give its proof by constructing the transformation procedure.
\begin{theorem}
\label{Thm:MainLowerBound}
Let $\ket{\vec{x}}$ and $\ket{\vec{y}}$ be two W-class states with $r_1\leq r_2\leq\cdots\leq r_N$ where $r_k=\frac{x_k}{y_k}$ and $y_k\not=0$ for $k=0,\cdots,N$.  If $r_1\geq r_0$, then $\ket{\vec{x}}$ can be converted to $\ket{\vec{y}}$ with probability $r_1$.  Otherwise, let $h$ be the largest integer such that $r_{0}> r_{h}$.  Then $\ket{\vec{x}}$ can be converted to $\ket{\vec{y}}$ with probability \[r_h\left(\frac{r_{h-1}}{r_0}\right)\dots\left(\frac{r_1}{r_0}\right).\]
\end{theorem}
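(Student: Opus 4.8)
The plan is to split on the sign of $r_1-r_0$ and, in both cases, exhibit an explicit single-success-branch protocol built only from the T1 and T2 moves of Lemma~\ref{Lem:MOSM/PSM prop}, followed by a deterministic cleanup. I would treat the base case $r_1\ge r_0$ first. Here every component ratio already dominates $r_0$, so only TYPE 2 moves are needed: apply Lemma~\ref{Lem:MOSM/PSM prop}(ii) to parties $2,3,\dots$ in turn, at each step pulling the next ratio down onto the common value of the already-processed ``low group'' $r_1=\cdots=r_{k-1}$ while that whole group (and $r_0$) is boosted by $1/p$. Writing $v_k$ for the common group ratio after party $k$ is absorbed, each step obeys $v_k=v_{k-1}/p_k$, so the running success probability telescopes to $\prod p_j=r_1/v_k$. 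The one essential refinement is to stop boosting exactly when the group first reaches ratio $1$: at the step that would carry $v$ past $1$ I would instead choose the measurement probability so that the group lands on $v=1$ rather than on the equalized value. Then $\prod p_j=r_1/1=r_1$, every surviving party has ratio $\ge 1$, and a round of deterministic filters (each party with $x_k>y_k$ resets $x_k\to y_k$, dumping the surplus into the $0$-component) finishes at $\ket{\vec{y}}$ exactly, by normalization. Overshooting the group past $1$ is precisely what costs probability, so landing on $1$ is the crux of hitting the bound $r_1$.

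For the general case $r_1<r_0$ I would first lift the deficient parties up to the $r_0$ level and thereby reduce to the base case. Process parties $h,h-1,\dots,1$ in decreasing order using the TYPE 1 move of Lemma~\ref{Lem:MOSM/PSM prop}(i) (with scale factor $s^{(k)}$ and probability $p^{(k)}$): the move on party $h$ is legal because $r_{h+1}\ge r_0\ge r_h$, it raises $r_h$ to the (rescaled) $r_0$, and afterwards the hypothesis $r_k^{\text{new}}=r_0^{\text{new}}\ge r_{k-1}^{\text{new}}$ needed for the next party holds, so the chain of legal moves continues down to party $1$. After this first phase $r_0=r_1=\cdots=r_h=\rho$ with $\rho=(\prod_{k=1}^h s^{(k)})\,r_0$, while each $r_j$ with $j>h$ stays above $\rho$; this is exactly a base-case instance with smallest ratio $\rho$, which by the first part converts with probability $\rho$.

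The probability accounting is where the clean formula appears, and I would organize it around two observations. First, although the individual $p^{(k)}$ are messy, the lemma gives $s^{(k)}p^{(k)}=r_k^{\text{cur}}/r_0^{\text{cur}}$, and since the earlier T1 moves rescale $r_k$ and $r_0$ by the \emph{same} factors, the quotient equals the original $r_k/r_0$; hence $\prod_{k=1}^h p^{(k)}=\frac{\prod_{k=1}^h (r_k/r_0)}{\prod_{k=1}^h s^{(k)}}$. Second, the same $\prod s^{(k)}$ is exactly the factor relating $\rho$ to $r_0$. Multiplying the first-phase probability by the base-case probability $\rho$ therefore cancels $\prod s^{(k)}$ completely, leaving
\[
P \;=\; \Big(\prod_{k=1}^h \tfrac{r_k}{r_0}\Big)\, r_0 \;=\; \frac{r_1 r_2 \cdots r_h}{r_0^{\,h-1}} \;=\; r_h\Big(\tfrac{r_{h-1}}{r_0}\Big)\cdots\Big(\tfrac{r_1}{r_0}\Big),
\]
which is the claimed value; the sub-case $h=1$ recovers $P=r_1$ and dovetails with the base case.

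The main obstacle, I expect, is not any single estimate but the bookkeeping that keeps the construction honest at every stage: checking that the Lemma~\ref{Lem:MOSM/PSM prop} preconditions are met after each rescaling (so the moves really are legal in the stated order), confirming that in the base case the group ratio reaches $1$ before the parties are exhausted --- equivalently that a full sweep would land at $v_N\ge 1$, which is exactly the hypothesis $r_1\ge r_0$ --- and verifying that the terminal deterministic reductions, which push the surplus of the over-shot components into the $0$-component, restore the $0$-component to $y_0$ on the nose. Each of these is a short normalization computation, but they must be tracked simultaneously, since the measurements rescale $r_0$ and the unprocessed ratios at every step.
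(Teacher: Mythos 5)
Your proposal is correct and follows essentially the same route as the paper's own proof: a first phase of T1 measurements by parties $h,h-1,\dots,1$ that successively equalize each ratio with the (rescaled) $r_0$, then a second phase of T2 measurements that merge the remaining parties one at a time into the low group until the group ratio lands exactly on $1$, finished off by deterministic filters. Your probability bookkeeping (the telescoping $\prod p_j=r_1/v$ and the cancellation of $\prod s^{(k)}$ against $\rho$) is only a mild reorganization of the paper's computation and yields the identical formula $r_h(r_{h-1}/r_0)\cdots(r_1/r_0)$.
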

\noindent\textbf{Protocol -}

The protocol can be divided into two parts: in the first only T1 measurements are performed, and in the second only T2s.  If $r_1>r_0$, proceed to the second part and let $h=1$, $p_1=1$, and $r_i^{(1)}=r_i$ for all parties $i$.  In round one, party $h$ performs a T1 such that $r_h^{(1)}=r_0^{(1)}$ with $s_1p_1=\frac{r_h}{r_0}$.  In round two, party $h-1$ performs an T1 such that $r_{h-1}^{(2)}=r_0^{(2)}=r_h^{(2)}$ with $s_2p_2=\frac{r_{h-1}^{(1)}}{r^1_{(0)}}=\frac{r_{h-1}}{r_0}$.  This process is continued for $h$ rounds.  The end result is $r^{(h)}_h=s_{h}r_h^{(h-1)}=s_hs_{h-1}s_{h-2}\cdots s_2\frac{r_h}{p_1}$, and $r^{(h)}_n\geq\cdots\geq r^{(h)}_{h+1}\geq r^{(h)}_h=r^{(h)}_{h-1}=\cdots\geq r_0^{(h)}$, where the last inequality is always tight when $r_0\geq r_1$.  The next part of the protocol now begins with only T2 measurements performed.  In round $h+1$ party $h+1$ performs a T2 with probability $p_{h+1}$ such that $r_{h+1}^{(h+1)}=r^{(h+1)}_h=\cdots\geq r_0^{(h+1)}$ with $r^{(h+1)}_{h}=\frac{1}{p_{h+1}}r^{(h)}_h$.  Next, party $h+2$ performs a T2 with probability $p_{h+2}$ such that $r^{(h+2)}_{h+2}=r^{(h+2)}_{h+1}=r^{{h+2}}_h=\frac{1}{p_{h+2}p_{h+1}}r^{(h)}_h$.  This process is continued until right before some round $l$ in which $r_{l-1}^{(l-1)}\leq 1$ and $r^{(l-1)}_l\geq 1$.  Note such a round will exist because in every round $j$ satisfying $r_0^{(j)}\leq r_N^{(j)}$, there is always some component $i\not=0$ with $r_i^{(j)}\geq 1$.  Returning to the protocol, in round $l$, party $l$ applies a T2 measurement with $p_{l}=r_{l-1}^{(l-1)}$.  As a result, $r^{(l)}_i\geq 1$ for all $i\geq 1$ since $r_i^{(l)}=1$ for $1\leq i< l$, $r_l^{(l)}\geq 1$ by \eqref{Eq:coords change MOSM/PSM}, and $r_j^{(l)}\geq r_l^{(l)}$ for $j\geq l$.  The total probability is 
\begin{align*}
&p_1p_2...p_{l-1}p_{l}=\frac{p_1p_2...p_{l-1}}{p_{h+1}p_{h+2}...p_{l-1}}r^{(h)}_h=p_1\cdots p_h r^{(h)}_h\\
&=p_1\cdots p_h s_hs_{h-1}s_{h-2}\cdots\frac{r_h}{p_1}=r_h\left(\frac{r_{h-1}}{r_0}\right)\dots\left(\frac{r_1}{r_0}\right).
\end{align*}

In order for this protocol to be suitable for any W-class transformation, we must consider the cases when $y_k=0$.  If $y_k=0$ for $k\geq 1$, then the $k^{th}$ party simply first disentangles itself with probability one from the rest of the system and the above protocol is performed on the $N-1$ party state $\ket{\vec{x}'}$ where $x'_0=x_0+x_k$ and $x'_j=x_j$ with $j\geq 1$.  If $y_k=0$ for $k=0$, then party $i$ specified by $x_i=\max_{j\geq 1}\{x_j\}$ performs the filter $M=\sqrt{\lambda}\left(\begin{smallmatrix}1&-\sqrt{\frac{x_0}{x_i}}\\0&1\end{smallmatrix}\right)$ with success probability $\lambda(1-x_0)$ where $\lambda=\frac{2x_i}{x_0+2x_i+\sqrt{x_0^2+4x_ix_0}}$.  This changes the coordinates as $x_i\to\frac{x_i}{1-x_0}$, and hence the constructed protocol can be implemented with an overall success probability of $\lambda r_1$.  

Our protocol is most general in that it and the derived success probability apply to all W-class transformations, even those whose target state is not $N$-partite entangled.  As an example, we compute the probability for an arbitrary W-state distillation.
\begin{corollary}
Let $\ket{\vec{x}}$ be an $N$-party W-class state.  Then
\[P_{max}(\ket{\vec{x}}\to\ket{W_N})\geq \frac{2x_Nx_1N}{x_0+2x_N+\sqrt{x_0^2+4x_Nx_0}}.\]
\end{corollary}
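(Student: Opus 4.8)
The plan is to apply Theorem~\ref{Thm:MainLowerBound} directly to the special case where the target state is the $W_N$ state, whose coefficient vector is $\vec{y}=(\tfrac{1}{N},\dots,\tfrac{1}{N})$ with $y_0=0$. Since $y_0=0$, I expect to be in the ``$y_k=0$ for $k=0$'' branch of the protocol discussed immediately after the theorem. First I would identify the relevant ratios: because $y_k=\tfrac{1}{N}$ for all $k\geq 1$, we have $r_k=\tfrac{x_k}{y_k}=Nx_k$, so the ordering $r_1\leq\cdots\leq r_N$ corresponds exactly to $x_1\leq\cdots\leq x_N$, and the smallest ratio is $r_1=Nx_1$. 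This matches the choice $x_i=\max_{j\geq 1}\{x_j\}=x_N$ used by the single distinguished party in the $y_0=0$ procedure.

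Next I would invoke the stated $y_0=0$ construction verbatim. According to the excerpt, when $y_0=0$ the party $i$ with $x_i=\max_{j\geq1}\{x_j\}$ (here $i=N$) applies the filter $M$ with success probability $\lambda(1-x_0)$, where $\lambda=\frac{2x_N}{x_0+2x_N+\sqrt{x_0^2+4x_Nx_0}}$ (substituting $x_i=x_N$), and this resets $x_N\to\frac{x_N}{1-x_0}$ so that the new state has vanishing zeroth component. The remaining protocol then runs on this new state and succeeds with probability $r_1$ relative to it, giving the overall probability $\lambda r_1$. Substituting $r_1=Nx_1$ and the explicit $\lambda$ yields
\[
\lambda r_1=\frac{2x_N}{x_0+2x_N+\sqrt{x_0^2+4x_Nx_0}}\cdot Nx_1=\frac{2x_Nx_1N}{x_0+2x_N+\sqrt{x_0^2+4x_Nx_0}},
\]
which is precisely the claimed lower bound. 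Since Theorem~\ref{Thm:MainLowerBound} and its accompanying protocol give an achievable success probability, and $P_{max}$ is by definition the supremum of achievable probabilities, we conclude $P_{max}(\ket{\vec{x}}\to\ket{W_N})\geq\lambda r_1$.

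The only subtlety I anticipate, and hence the main point requiring care, is confirming that after the filter $M$ is applied the residual ratio governing the rest of the protocol is exactly $r_1$ rather than something shifted. Concretely, I must check that resetting $x_N\to\tfrac{x_N}{1-x_0}$ (and rescaling all other $x_j\to\tfrac{x_j}{1-x_0}$ by normalization, since the new zeroth component is zero) leaves the \emph{minimal} ratio $r_1$ intact in the sense used by the theorem. Because all $x_j$ for $j\geq 1$ scale by the common factor $\tfrac{1}{1-x_0}$ while the $y_j$ are unchanged, every $r_j$ scales uniformly, so the ordering is preserved and the factor is absorbed cleanly into the overall bookkeeping; this is exactly the ``overall success probability of $\lambda r_1$'' asserted in the excerpt. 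Verifying this normalization consistency is routine but is the step where an error would most easily creep in, so I would state it explicitly before collecting terms.
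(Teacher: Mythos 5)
Your proposal is correct and follows essentially the same route as the paper: the corollary is a direct application of the $y_0=0$ branch of the protocol following Theorem~\ref{Thm:MainLowerBound}, with party $N$ (the maximal component, given the ordering $r_1\leq\cdots\leq r_N$ and $y_j=\tfrac{1}{N}$) applying the filter $M$ and the remaining protocol contributing the factor $r_1=Nx_1$, giving $\lambda r_1$. Your explicit check that the filter's success probability $\lambda(1-x_0)$ and the uniformly rescaled residual ratios $r_j/(1-x_0)$ combine to $\lambda r_1$ is exactly the bookkeeping the paper leaves implicit.
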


Observe that in Theorem \ref{Thm:MainLowerBound}, the lower bound becomes $r_1$ whenever $r_2\geq r_0$.  Combined with the results of the previous section, we see that $P_{max}(\ket{\vec{x}}\to\ket{\vec{y}})=r_1$ if and only if $r_2\geq r_0$.   

\section{General Features of Symmetric Transformations}

The symmetric W-class states constitute a one parameter family of states which make them easier to analyze.  Any such state can be represented as $\ket{s}=\sqrt{1-s}\ket{\vec{0}}+\sqrt{\frac{s}{N}}\sum_{i=1}^N\ket{\vec{i}}$.  Note that the state $\ket{W_N}$ corresponds to $s=1$.  The optimal probability of converting $\ket{s}\to\ket{t}$, at least by a one-shot measurement, can be numerically computed by brute force using Lagrange multipliers.  However, in this section we are less concerned with analytic expressions for optimal conversion probabilities and more with general properties of transforming symmetric states.

In particular, it was recently shown that one multi-qubit symmetric state can be reversibly converted into another if and only if the transformation is feasible by a protocol in which each party performs the same one-shot measurement.  However, one question that was not investigated is whether the optimal one-shot success probability can always obtained by a symmetric filter.  Here, we answer this question by examining the transformation of an arbitrary tripartite symmetric W-class state to the target state $\ket{W_3}$.  An optimal symmetric one-shot measurement with success probability $q$ can be expressed as $(A\otimes A\otimes A)\ket{s}=\sqrt{q}\ket{W_3}$ with the operator $A$ satisfying $A^\dagger A\leq I$ and $\det(I-A^\dagger A)=0$.  For comparison, we will consider the same transformation when only a single party acts non-trivially: $(A\otimes I\otimes I)\ket{s}=\sqrt{p}\ket{W_3}$.  Note that any conversion among symmetric W-class states can always be achieved with a non-zero probability by the action of just a single party up to a local basis change.

Studying the difference in optimal conversion rates between a multiparty symmetric filter and a single party filter is of interest because it sheds light on two competing intuitions.  On the one hand, when more parties act, the ``work of conversion'' can be distributed, and in light of the overall symmetry, it seems reasonable to expect that it's best for this work to be shared equally in the form of identical filters.  On the other hand, if only one party performs a measurement, there are fewer possibilities for failure.  Below we show that neither of these intuitions are true in general.

In both cases without loss of generality we can take $A$ to have the form $A=\left(\begin{smallmatrix}a&b\\0&c\end{smallmatrix}\right)$.  Then the comparative optimization problems become
\begin{align}
&\max \;p &&\max \;q\notag\\
&\text{subject to:}&&\text{subject to:}\notag\\
&a\sqrt{1-s}+b\sqrt{\frac{s}{3}}=0, &&a^3\sqrt{1-s}+3ba^2\sqrt{\frac{s}{3}}=0,\notag\\
&c\sqrt{s}=\sqrt{p}, &&a^2c\sqrt{s}=\sqrt{q},\notag\\
&a=c,&&
\end{align}
with both satisfying the common constraint that $b^2=(1-a^2)(1-c^2)$.  The respective solutions as functions of $s$ are 
\begin{align*}
p_{max}(s)&=\frac{1}{2}(3-s-\sqrt{3(1-s)(3+s)}),\\ 
q_{max}(s)&=\frac{(3+9s-\beta(s))^2(-3+3s+\beta(s))}{48(1+2s)(1-s+\beta(s))},
\end{align*}
where $\beta(s)=\sqrt{3(1-s)(3+5s)}$.

The difference $p_{max}(s)-q_{max}(s)$ is plotted in Fig. \ref{plot} as $s$ varies between zero and one.  From it, we see that in general, the optimal strategy for transforming two symmetric states involves neither a symmetric measurement nor the action of just a single party.  For the particular class of transformations we consider here, the two strategies have the same maximum efficiency only when $s=\frac{3}{61}(3+8\sqrt{3})$.  However, the difference in optimal probabilities is never greater than 1.4\%.  It should also be noted that neither of these schemes may be the overal optimal protocol.  While further numerical analysis could provide an answer to this question, we do not persue it here as we consider the reported result of greater interest.
\begin{figure}[h]
\includegraphics[scale=0.6]{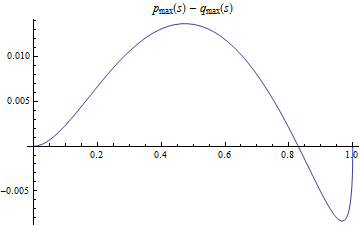}
\caption{\label{plot} The difference in maximum transformation probabilities when only one party measures ($p_{max}(s)$) versus an identical filter by all parties ($q_{max}(s)$).}
\end{figure}
\section{Conclusion}
In this article, we have investigated the LOCC convertibility of $N$-party W-class states.  For a large family of transformations, we have proven their optimal conversion rate to achieve the upper bound of $\min_i\{\frac{x_i}{y_i}\}$.  The question of transforming symmetric states was considered in the context of W-class states.  Despite the necessity of SLOCC equivalent states to be related by a symmetric measurement, we have found that this symmetry cannot be extended to the measurement achieving optimality.  A future direction of research might involve considering the W-class transformations when $r_0>r_2$.  However, preliminary numerical work on this problem has revealed the computation to be quite unyielding.  It would also be interesting to know when the lower bound of Theorem \ref{Thm:MainLowerBound} is optimal.  

{\bf Acknowledgments}
\\
We thank Benjamin Fortescue for helpful discussions in the development of this work as well as support from the funding agencies CIFAR, CRC, NSERC, and QuantumWorks.

\bibliography{EricQuantumBib}

\end{document}